\documentclass{tlp}

\usepackage{amsmath,amssymb}
\usepackage{url}
\usepackage{color}
\usepackage{enumitem}
\usepackage[all]{xy}
\usepackage{algpseudocode}
\usepackage{svg}
\usepackage{fancyvrb}
\usepackage{hyperref}

\newcommand{\tuple}[1]{\langle #1 \rangle}

\def\At{\mathit{At}}

\newcommand\Lb[1]{\mathit{Lb}(#1)}
\newcommand\HH[1]{\mathit{H}(#1)}
\newcommand\Hd[1]{\mathit{Head}(#1)}
\newcommand\Bd[1]{\mathit{Body}(#1)}
\newcommand\BB[1]{\mathit{B}^+(#1)}
\newcommand\Bdp[1]{\mathit{Body}^+(#1)}
\newcommand\Bdn[1]{\mathit{Body}^-(#1)}
\def\SM{\mathit{SM}}
\def\JM{\mathit{JM}}
\def\SPM{\mathit{SPM}}
\newcommand{\Sup}[3]{\mathit{SUP}(#2,#1,#3)}

\newcommand{\xclingo}[0]{\texttt{xclingo}}

\newtheorem{theorem}{Theorem}
\newtheorem{proposition}{Proposition}

\newtheorem{corollary}{Corollary}
\newtheorem{definition}{Definition}
\newtheorem{example}{Example}
\def\qed{\hfill $\Box$}



\def\Not{\hbox{\rm \em not }}
\def\eqdef{\stackrel{\text{df}}{=}}

\begin{document}

\lefttitle{P. Cabalar and B. Mu\~niz}

\jnlPage{1}{8}
\jnlDoiYr{2021}
\doival{10.1017/xxxxx}

\title[Model Explanation via Support Graphs]{Model Explanation via Support Graphs
}

\begin{authgrp}
\author{\sn{Pedro} \gn{Cabalar}}
\affiliation{University of A Coru\~na, Spain}
\author{\sn{Brais} \gn{Mu{\~n}iz}}
\affiliation{University of A Coru\~na, Spain}
\end{authgrp}

\history{\sub{xx xx xxxx;} \rev{xx xx xxxx;} \acc{xx xx xxxx}}

\maketitle

\begin{abstract}
In this note, we introduce the notion of support graph to define explanations for any model of a logic program.
An explanation is an acyclic support graph that, for each true atom in the model,  induces a proof in terms of program rules represented by labels.
A classical model may have zero, one or several explanations: when it has at least one, it is called a justified model.
We prove that all stable models are justified whereas, in general, the opposite does not hold, at least for disjunctive programs.
We also provide a meta-programming encoding in Answer Set Programming that generates the explanations for a given stable model of some program.
We prove that the encoding is sound and complete, that is, there is a one-to-one correspondence between each answer set of the encoding and each explanation for the original stable model.
\end{abstract}

\begin{keywords}
Answer Set Programming, Explanations, Supported Models, Justified Models
\end{keywords}



\section{Introduction}
\label{sec:intro}

In the past few years, Artificial Intelligence (AI) systems have made great advancements, generally at the cost of increasing their scale and complexity.
Although symbolic AI approaches have the advantage of being verifiable, the number and size of possible justifications generated to explain a given result may easily exceed the capacity of human comprehension.
Consider, for instance, the case of Answer Set Programming (ASP)~\cite{BET11}, a successful logic programming paradigm for practical Knowledge Representation and problem solving. 
Even for a positive program, whose answer set is unique, the number of proofs for an atom we can form using \emph{modus ponens} can be exponential.
It makes sense, then, to generate explanations through the typical ASP problem solving orientation.
Namely, we may consider each explanation   \emph{individually} as one solution to the ``explainability problem'' (that is, explaining a model) and let the user decide to generate one, several or all of them, or perhaps to impose additional preference conditions as done with optimisation problems in ASP.

In this technical note, we describe a formal characterisation of explanations in terms of graphs constructed with atoms and program rule labels.
Under this framework, models may be \emph{justified}, meaning that they have one or more \emph{explanations}, or \emph{unjustified} otherwise.
We prove that all stable models are justified whereas, in general, the opposite does not hold, at least for disjunctive programs.
We also provide an ASP encoding to generate the explanations of a given answer set of some original program, proving the soundness and completeness of this encoding.

The rest of this note is structured as follows.
Section~\ref{sec:graphs} contains the formal definitions for explanations and their properties with respect to stable models.
Section~\ref{sec:encoding} describes the ASP encoding and proves its soundness and completeness.
Section~\ref{sec:related-work} briefly comments on related work and, finally, Section~\ref{sec:conclusions} concludes the paper.

\section{Explanations as Support Graphs}
\label{sec:graphs}

We start from a finite\footnote{We leave the study of infinite signatures for future work. This will imply explanations of infinite size, but each one should contain a finite proof for each atom.} signature $\At$, a non-empty set of propositional atoms.
A \emph{(labelled) rule} is an implication of the form:
\begin{eqnarray}
\ell : p_1 \vee \dots \vee p_m \leftarrow q_1 \wedge \dots \wedge q_n \wedge \neg s_1 \wedge \dots \wedge \neg s_j \wedge \neg\neg t_1 \wedge \dots \wedge \neg\neg t_k \label{f:rule}
\end{eqnarray}
Given a rule $r$ like \eqref{f:rule}, we denote its label as $\Lb{r}\eqdef \ell$.
We also call the disjunction in the consequent $p_1 \vee \dots \vee p_m$ the \emph{head} of $r$, written $\Hd{r}$, and denote the set of head atoms as $\HH{r} \eqdef \{p_1, \dots, p_m\}$;
the conjunction in the antecedent is called the \emph{body} of $r$ and denoted as $\Bd{r}$.
We also define the positive and negative parts of the body respectively as the conjunctions $\Bdp{r} \eqdef q_1 \wedge \dots \wedge q_n$ and $\Bdn{r} \eqdef \neg s_1 \wedge \dots \wedge \neg s_j \wedge \neg\neg t_1 \wedge \dots \wedge \neg\neg t_k$.
The atoms in the positive body are represented as $\BB{r} \eqdef \{q_1,\dots,q_n\}$. 
As usual, an empty disjunction (resp. conjunction) stands for $\bot$ (resp. $\top$).
%
%
A rule $r$ with empty head $\HH{r}=\emptyset$ is called a \emph{constraint}.
On the other hand, when $\HH{r}=\{p\}$ is a singleton, $\BB{r}=\emptyset$ and $\Bdn{r}=\top$ the rule has the form $\ell: p \leftarrow \top$ and is said to be a \emph{fact}, simply written as $\ell:p$.
The use of double negation in the body allows representing elementary choice rules.
For instance, we will sometimes use the abbreviation $\ell : \{p\} \leftarrow B$ to stand for $\ell : p \leftarrow B \wedge \neg \neg p$.
A \emph{(labelled) logic program} $P$ is a set of labelled rules where no label is repeated.
Note that $P$ may still contain two rules $r, r'$ with same body and head $\Bd{r}=\Bd{r'}$ and $\HH{r}=\HH{r'}$, but different labels $\Lb{r} \neq \Lb{r'}$.
A program $P$ is \emph{positive} if $\Bdn{r}=\top$ for all rules $r \in P$.
A program $P$ is \emph{non-disjunctive} if $|\HH{r}|\leq 1$ for every rule $r \in P$.
Finally, $P$ is \emph{Horn} if it is both positive and non-disjunctive: note that this may include (positive) constraints $\bot \leftarrow B$.

A propositional interpretation $I$ is any subset of atoms $I \subseteq \At$.
We say that a propositional interpretation is a \emph{model} of a labelled program $P$ if $I \models \Bd{r} \to \Hd{r}$ in classical logic, for every rule $r \in P$.
The  \emph{reduct} of a labelled program $P$ with respect to $I$, written $P^I$, is a simple extension of the standard reduct by~\cite{GL88} that collects now the \emph{labelled} positive rules:
\begin{eqnarray*}
P^I \eqdef \{ \ \Lb{r}: \mathit{Head}(r) \leftarrow \Bdp{r} \ \mid \ r \in P, \ I \models \Bdn{r}\ \}
\end{eqnarray*}
As usual, an interpretation $I$ is a \emph{stable model} (or \emph{answer set}) of a program $P$ if $I$ is a minimal model of $P^I$.
Note that, for the definition of stable models, the rule labels are irrelevant.
We write $\SM(P)$ to stand for the set of stable models of $P$.
%

We define the rules of a program $P$ that \emph{support} an atom $p$ under interpretation $I$ as
$\Sup{I}{P}{p} \eqdef \{ r \in P \mid p \in \HH{r}, I \models \Bd{r}\}$
%
that is, rules with $p$ in the head whose body is true w.r.t. $I$.
The next proposition proves that, given $I$, the rules that support $p$ in the reduct $P^I$ are precisely the positive parts of the rules that support $p$ in $P$.
\begin{proposition}\label{prop:redsupport}
For any model $I \models P$ of a program $P$ and any atom $p \in I$:
$\Sup{I}{P^I}{p}=\Sup{I}{P}{p}^I$.
\end{proposition}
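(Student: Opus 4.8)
The plan is to prove the set equality $\Sup{I}{P^I}{p} = \Sup{I}{P}{p}^I$ by double inclusion, exploiting the fact that the reduct acts rule-by-rule in a head-preserving way. First I would fix notation for the action of the reduct on a single rule: for $r \in P$ with $I \models \Bdn{r}$, write $r^I$ for the reduced rule $\Lb{r}: \Hd{r} \leftarrow \Bdp{r}$ that it contributes to $P^I$. The two structural facts I will use repeatedly are that the reduct leaves the head unchanged, $\HH{r^I}=\HH{r}$, and that it replaces the body by its positive part, $\Bd{r^I}=\Bdp{r}$. Moreover, since no label is repeated in $P$ and the reduct copies each label verbatim, the map $r \mapsto r^I$ is a bijection between $\{r \in P \mid I \models \Bdn{r}\}$ and $P^I$; this is what guarantees that no rule is double-counted or lost when passing to the reduct.

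The computational core of the argument is the decomposition $\Bd{r}=\Bdp{r}\wedge\Bdn{r}$, which yields the equivalence: $I \models \Bd{r}$ holds iff both $I \models \Bdp{r}$ and $I \models \Bdn{r}$. With this in hand, the right-hand side unfolds, by definition of the reduct applied to the set $\Sup{I}{P}{p}$, to the collection of reduced rules $r^I$ where $r$ ranges over $P$ subject to $p \in \HH{r}$, $I \models \Bd{r}$, and $I \models \Bdn{r}$. Here the last condition is redundant, being already entailed by $I \models \Bd{r}$, so the right-hand side is equally the set of $r^I$ with $p \in \HH{r}$, $I \models \Bdp{r}$, and $I \models \Bdn{r}$.

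For the inclusion $\subseteq$ I would take $r' \in \Sup{I}{P^I}{p}$, use the bijection to write $r' = r^I$ for a unique $r \in P$ with $I \models \Bdn{r}$, and then read off $p \in \HH{r}$ from $\HH{r^I}=\HH{r}$ and $I \models \Bdp{r}$ from $\Bd{r^I}=\Bdp{r}$; recombining $I \models \Bdp{r}$ with $I \models \Bdn{r}$ gives $I \models \Bd{r}$, so $r \in \Sup{I}{P}{p}$ and hence $r' \in \Sup{I}{P}{p}^I$. The inclusion $\supseteq$ is the same chain read backwards, the only subtle point being that a rule $r$ supporting $p$ automatically satisfies $I \models \Bdn{r}$ (again because $I \models \Bd{r}$), so it is genuinely retained by the reduct rather than discarded. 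The proof is thus essentially bookkeeping; the main thing to get right is the bijective reduct correspondence together with the observation that the negative-body condition is redundant on both sides, and I would note in passing that the hypotheses $p \in I$ and $I \models P$ are not actually needed for this equality.
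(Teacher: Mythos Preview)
Your proof is correct and follows essentially the same double-inclusion argument as the paper, using the decomposition $\Bd{r}=\Bdp{r}\wedge\Bdn{r}$ to move between support in $P$ and in $P^I$. Your version is in fact slightly tighter: you correctly exploit the uniqueness of labels to get a genuine bijection $r\mapsto r^I$ (the paper hedges with ``we could have more than one,'' which cannot actually occur given that labels are not repeated in $P$), and your closing remark that neither $I\models P$ nor $p\in I$ is needed for the equality is also correct.
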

\begin{proof}
We prove first $\supseteq$: suppose $r \in \Sup{I}{P}{p}$ and let us call $r' = \Lb{r}: \Hd{r} \leftarrow \Bdp{r}$.
Then, by definition, $I \models  \Bd{r}$ and, in particular, $I \models \Bdn{r}$, so we conclude $r' \in P^I$.
To see that $r' \in \Sup{I}{P^I}{p}$, note that $I \models \Bd{r}$ implies $I \models \Bdp{r} = \Bd{r'}$.

For the $\subseteq$ direction, take any $r' \in \Sup{I}{P^I}{p}$.
By definition of reduct, we know that $r'$ is a positive rule and that there exists some $r \in P$ where $\Lb{r}=\Lb{r'}$, $\HH{r}=\HH{r'}$, $\BB{r}=\BB{r'}$ and $I \models \Bdn{r}$.
Consider any rule $r$ satisfying that condition (we could have more than one): we will prove that $r \in \Sup{I}{P}{p}$.
Since $r'\in \Sup{I}{P^I}{p}$, we get $I \models \Bd{r'}$ but this is equivalent to $I \models \Bdp{r}$.
However, as we had $I \models \Bdn{r}$, we conclude $I \models \Bd{r}$ and so $r$ is supported in $P$ given $I$.
\end{proof}

\begin{definition}[Support Graph/Explanation]\label{def:exp}
Let $P$ be a labelled program and $I$ a classical model of $P$. 
A \emph{support graph} $G$ of $I$ under $P$ is a labelled directed graph $G=\tuple{I,E,\lambda}$ whose vertices are the atoms in $I$, the edges in $E \subseteq I \times I$ connect pairs of atoms, the function $\lambda: I \to \Lb{P}$ assigns a label to each atom, and $G$ further satisfies:
\begin{enumerate}[label={\rm (\roman*)},leftmargin=20pt]
\item\label{def:exp.2} $\lambda$ is injective
\item\label{def:exp.3} for every $p \in I$, the rule $r$ such that $\Lb{r}=\lambda(p)$ satisfies:\\
$r \in \Sup{I}{P}{p}$ and $\BB{r} = \{ q \mid (q,p) \in E \}$.
\end{enumerate}
A support graph $G$ is said to be an \emph{explanation} if it additionally satisfies:
\begin{enumerate}[label={\rm (\roman*)},leftmargin=20pt]
\setcounter{enumi}{2}
\item\label{def:exp.1} $G$ is acyclic.\qed
\end{enumerate}
\end{definition}
Condition \ref{def:exp.2} means that there are no repeated labels in the graph, i.e., $\lambda(p) \neq \lambda(q)$ for different atoms $p, q \in I$.
Condition \ref{def:exp.3} requires that each atom $p$ in the graph is assigned the label $\ell$ of some rule with $p$ in the head, with a body satisfied by $I$ and whose atoms in the positive body form all the incoming edges for $p$ in the graph.
Intuitively, labelling $p$ with $\ell$ means that the corresponding (positive part of the) rule has been fired, ``producing'' $p$ as a result.
Since a label cannot be repeated in the graph, each rule can only be used to produce one atom, even though the rule head may contain more than one (when it is a disjunction).
It is not difficult to see that an explanation $G=\tuple{I,E,\lambda}$ for a model $I$ is uniquely determined by its atom labelling $\lambda$.
This is because condition \ref{def:exp.3} about $\lambda$ in Definition \ref{def:exp} uniquely specifies all the incoming edges for all the nodes in the graph.
On the other hand, of course, not every arbitrary atom labelling corresponds to a well-formed explanation.
We will sometimes abbreviate an explanation $G$ for a model $I$ by just using its labelling $\lambda$ represented as a set of pairs of the form $\lambda(p):p$ with $p \in I$.

\begin{definition}[Supported/Justified model]
A classical model $I$ of a labelled program $P$ if $I \models P$ is said to be a \emph{supported model} of $P$ if there exists some support graph of $I$ under $P$.
Moreover, $I$ is said to be a \emph{justified model} of $P$ if there exists some explanation $G$ (i.e. acyclic support graph) of $I$ under $P$.
We write $\SPM(P)$ and $\JM(P)$ to respectively stand for the set of supported and justified models of $P$.\qed
\end{definition}

Obviously all justified models are supported $\JM(P) \subseteq \SPM(P)$ but, in general, the opposite does not hold, as we will see later.
Our main focus, however, is on justified models, since we will relate them to proofs, that are always acyclic.
We can observe that not all models are justified, whereas a justified model may have more than one explanation, as we illustrate next.
\begin{example}\label{ex:1}
Consider the labelled logic program $P$
\begin{eqnarray*}
\ell_1: \ a \vee b 
\quad\quad\quad
\ell_2: \ d \leftarrow a \wedge \neg c
\quad\quad\quad
\ell_3: \ d \leftarrow \neg b
\end{eqnarray*}
No model $I \models P$ with $c \in I$ is justified since $c$ does not occur in any head, so its support is always empty $\Sup{I}{P}{c}=\emptyset$ and $c$ cannot be labelled.
The models of $P$ without $c$ are $\{b\}$, $\{a,d\}$, $\{b,d\}$ and $\{a,b,d\}$ but only the first two are justified.
The explanation for $I=\{b\}$ corresponds to the labelling $\{(\ell_1: b)\}$ (it forms a graph with a single node).
Model $I=\{a,d\}$ has the two possible explanations:
\begin{eqnarray}
\ell_1:a \longrightarrow \ell_2:d
\hspace{100pt}
\ell_1:a \quad\quad \ell_3:d
\label{f:twoexplanations}
\end{eqnarray}
Model $I=\{b,d\}$ is not justified: we have no support for $d$ given $I$, $\Sup{I}{P}{d}=\emptyset$, because $I$ satisfies neither bodies of $\ell_2$ nor $\ell_3$.
On the other hand, model $\{a,b,d\}$ is not justified either, because  $\Sup{I}{P}{a}=\Sup{I}{P}{b}=\{\ell_1\}$ and we cannot use the same label $\ell_1$ for two different atoms $a$ and $b$ in a same explanation (condition~\ref{def:exp.2} in Def.~\ref{def:exp}).\qed
\end{example}

\begin{definition}[Proof of an atom]\label{def:proof}
Let $I$ be a model of a labelled program $P$, $G=\tuple{I,E,\lambda}$ an explanation for $I$ under $P$ and let $p \in I$.
The \emph{proof} for $p$ induced by $G$, written $\pi_G(p)$, is the derivation: 
\begin{eqnarray*}
\pi_G(p) & \eqdef & \frac{\pi_G(q_1) \ \dotsc \ \pi_G(q_n)}{p} \  \lambda(p),
\end{eqnarray*}
\noindent where, if $r \in P$ is the rule satisfying $\Lb{r}=\lambda(p)$, then $\{q_1,\dots,q_n\}= \BB{r}$.
%
When $n=0$, the  derivation antecedent $\pi_G(q_1) \ \dotsc \ \pi_G(q_n)$ is replaced by $\top$ (corresponding to the empty conjunction).\qed
\end{definition}

\begin{example}\label{ex:proof}
Let $P$ be the labelled logic program:
\begin{eqnarray*}
\ell_1: \ p 
\quad\quad\quad
\ell_2: \ q \leftarrow p
\quad\quad\quad
\ell_3: \ r \leftarrow p, q 
\end{eqnarray*}
$P$ has a unique justified model $\{p,q,r\}$ whose explanation is shown in Figure~\ref{fig:proof} (left) whereas the induced proof for atom $r$ is shown in Figure~\ref{fig:proof} (right).
\qed
\end{example}

\begin{figure}[htbp]
\[
\begin{array}{c@{\hspace{40pt}}c}
\xymatrix @-4mm {
\ell_1: \ p  \ar[r] \ar@/^{15pt}/[rr] & \ell_2: \ q \ar[r] & \ell_3: \ r
}
&
\cfrac {
  \cfrac {
    \cfrac{\ \top\ }{p}\ (\ell_1)
  }
  {q}\ (\ell_2)
  \quad\quad
  \cfrac{\ \top\ }{p}\ (\ell_1)
}
{r} \ (\ell_3)\\[10pt]
\text{Explanation} & \text{Proof for atom $r$}
\end{array}
\]
\caption{Some results for model $\{p,q,r\}$ of program in Example~\ref{ex:proof}.}
\label{fig:proof}
\end{figure}
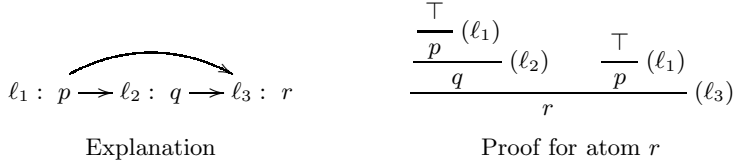
The next proposition trivially follows from the definition of explanations:

\begin{proposition}\label{prop:MP}
If $P$ is a Horn program, and $G$ is an explanation for a model $I$ of $P$ then, for every atom, $p \in I$, $\pi_G(p)$ corresponds to a Modus Ponens derivation of $p$ using the rules in $P$.
\end{proposition}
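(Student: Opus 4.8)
The plan is to argue by induction on the structure of the explanation, exploiting acyclicity to set up a well-founded order on the atoms of $I$. Since $G$ is acyclic, I would first fix a topological ordering of its vertices, so that every edge $(q,p) \in E$ has $q$ appearing strictly before $p$. This induces a well-founded ranking on the atoms of $I$, and I would perform induction on that rank, showing that for each $p \in I$ the derivation $\pi_G(p)$ is a legitimate Modus Ponens proof whose leaves are all $\top$ and whose every inference step instantiates a rule of $P$.

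For the inductive step, fix $p \in I$ and let $r \in P$ be the rule with $\Lb{r} = \lambda(p)$, which exists and is unique by conditions \ref{def:exp.2} and \ref{def:exp.3} of Definition~\ref{def:exp}. Condition \ref{def:exp.3} gives $r \in \Sup{I}{P}{p}$ and $\BB{r} = \{q \mid (q,p) \in E\} = \{q_1,\dots,q_n\}$. Here I would use the \emph{Horn} hypothesis crucially: because $P$ is non-disjunctive, $r$ has a single head atom, and since $r \in \Sup{I}{P}{p}$ forces $p \in \HH{r}$, we get $\HH{r} = \{p\}$, so $r$ really is the implication $p \leftarrow q_1 \wedge \dots \wedge q_n$. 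Because $P$ is also positive, $\Bdn{r} = \top$, so the body of $r$ is exactly the positive conjunction $q_1 \wedge \dots \wedge q_n$ with no negative literals to interfere. Each $q_i$ satisfies $(q_i,p) \in E$, hence $q_i$ precedes $p$ in the topological order, so by the induction hypothesis $\pi_G(q_i)$ is already a valid Modus Ponens derivation of $q_i$. Then $\pi_G(p)$, as defined in Definition~\ref{def:proof}, simply appends one Modus Ponens step applying $r$ to the conclusions $q_1,\dots,q_n$ of the subproofs, yielding $p$; when $n=0$ the rule $r$ is a fact $\ell:p$ and the antecedent is $\top$, which is the base case.

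The statement is flagged as ``trivial,'' and indeed the only genuine content is making the three syntactic ingredients line up: acyclicity guarantees well-foundedness of the recursion in Definition~\ref{def:proof} so that $\pi_G(p)$ is a finite derivation (here the finiteness of $\At$ is also reassuring), non-disjunctiveness guarantees that the rule labelling $p$ has $p$ as its \emph{sole} head atom so that the Modus Ponens step concludes exactly $p$, and positivity guarantees that the body contains no default-negated literals, so that a satisfied body coincides with a conjunction of derived premises. The step I expect to need the most care, rather than difficulty, is verifying that acyclicity truly yields a terminating recursion and that each local inference is a correct Modus Ponens instance; the rest follows mechanically from unfolding Definitions~\ref{def:exp} and~\ref{def:proof}.
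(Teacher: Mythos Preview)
Your proposal is correct and matches the paper's approach: the paper simply states that the proposition ``trivially follows from the definition of explanations'' and gives no explicit proof, and what you have written is precisely the unpacking of that remark --- induction along a topological order supplied by acyclicity, with the Horn hypothesis ensuring each labelled rule has $p$ as its unique head and a purely positive body.
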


It is worth mentioning that explanations do not generate any arbitrary Modus Ponens derivation of an atom, but only those that are globally ``coherent'' in the sense that, if any atom $p$ is repeated in a proof, it is always justified repeating \emph{the same subproof}.


%
%

In the previous examples, justified and stable models coincided: one may wonder whether this is a general property.
As we see next, however, every stable model is justified but, in general, the opposite may not hold.
To prove that stable models are justified, we start proving a correspondence between explanations for any model $I$ of $P$ and explanations under $P^I$.

\begin{proposition}\label{prop:reduct}
Let $I$ be a model of program $P$. 
Then $G$ is an explanation for $I$ under $P$ iff $G$ is an explanation for $I$ under $P^I$.
\end{proposition}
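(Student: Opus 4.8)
The plan is to reduce the biconditional to a single local condition. An explanation is a support graph that is additionally acyclic, and both acyclicity (condition~\ref{def:exp.1}) and the injectivity of $\lambda$ (condition~\ref{def:exp.2}) are properties of the graph $G=\tuple{I,E,\lambda}$ alone, independent of the program. Hence they transfer immediately between $P$ and $P^I$, and the whole argument collapses to showing that condition~\ref{def:exp.3} holds for every $p \in I$ under $P$ if and only if it holds under $P^I$.

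Next I would set up the correspondence between rules. Since $P$ has no repeated labels and the reduct preserves labels, heads and positive bodies while dropping only the negative body, there is a label-preserving bijection between the rules of $P^I$ and the rules $r \in P$ with $I \models \Bdn{r}$; for such $r$ I write $r^I = \Lb{r}: \Hd{r} \leftarrow \Bdp{r}$. In particular, for any atom $p$ the phrase ``the rule with label $\lambda(p)$'' denotes a single well-defined rule in each program, and the two such rules $r$ and $r'=r^I$ satisfy $\BB{r}=\BB{r'}$, so they induce exactly the same set $\{q \mid (q,p) \in E\}$ of incoming edges.

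For the forward direction, given an explanation $G$ under $P$ and $p \in I$, let $r$ be the rule with $\Lb{r}=\lambda(p)$. Condition~\ref{def:exp.3} gives $r \in \Sup{I}{P}{p}$, hence $I \models \Bdn{r}$, so $r^I \in P^I$ carries the label $\lambda(p)$; Proposition~\ref{prop:redsupport} then yields $r^I \in \Sup{I}{P}{p}^I = \Sup{I}{P^I}{p}$, and since $\BB{r^I}=\BB{r}=\{q \mid (q,p)\in E\}$, condition~\ref{def:exp.3} holds under $P^I$ (this also confirms $\lambda(p) \in \Lb{P^I}$, so the typing of $\lambda$ is fine). The backward direction is symmetric: from the rule $r'\in P^I$ with $\Lb{r'}=\lambda(p)$ I recover the unique $r \in P$ with $\Lb{r}=\lambda(p)$ and $r^I=r'$, and Proposition~\ref{prop:redsupport} turns $r' \in \Sup{I}{P^I}{p}=\Sup{I}{P}{p}^I$ into $r \in \Sup{I}{P}{p}$, with $\BB{r}=\BB{r'}$ again matching the incoming edges.

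I expect this to be essentially bookkeeping once Proposition~\ref{prop:redsupport} is available. The only point needing care is the rule-by-label matching: ensuring that ``the rule labelled $\lambda(p)$'' is a single well-defined rule in each program (guaranteed by the no-repeated-labels assumption) and that its positive body, which fixes the incoming edges, is identical on both sides. No genuine obstacle arises beyond this, since the edge set $E$ and the acyclicity of $G$ are untouched when passing to the reduct.
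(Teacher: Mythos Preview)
Your proposal is correct and follows essentially the same approach as the paper: both reduce the claim to Proposition~\ref{prop:redsupport}, observing that the supporting rules for each atom carry the same labels and the same positive bodies in $P$ and in $P^I$, so the admissible labellings and the induced edge sets coincide. Your write-up is simply more explicit than the paper's sketch, spelling out the label-preserving bijection, the typing of $\lambda$ into $\Lb{P^I}$, and the two directions separately.
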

\begin{proof}
By Proposition~\ref{prop:redsupport}, for any atom $p \in I$, the labels in $\Sup{I}{P}{p}$ and $\Sup{I}{P^I}{p}$ coincide, so there is no difference in the ways in which we can label $p$ in explanations for $P$ and for $P^I$.
On the other hand, the rules in $\Sup{I}{P^I}{p}$ are the positive parts of the rules in $\Sup{I}{P}{p}$, so the graphs we can form are also the same.
\end{proof}


\begin{corollary}
$I \in \JM(P)$ iff $I \in \JM(P^I)$.
\end{corollary}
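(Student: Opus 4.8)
The plan is to derive this corollary directly from Proposition~\ref{prop:reduct} by unfolding the definition of a justified model. Recall that $I \in \JM(P)$ means, by definition, that $I$ is a classical model of $P$ and that there exists some explanation $G$ of $I$ under $P$. So the statement to establish is an equivalence between the existence of an explanation under $P$ and the existence of an explanation under $P^I$.

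First I would observe that Proposition~\ref{prop:reduct} is stated under the hypothesis that $I$ is a model of $P$, so I must first check that the model hypotheses on both sides of the biconditional are compatible. For the forward direction, if $I \in \JM(P)$ then in particular $I \models P$, so Proposition~\ref{prop:reduct} applies and tells us that any explanation $G$ for $I$ under $P$ is also an explanation for $I$ under $P^I$; hence $I \in \JM(P^I)$. For the converse, I need $I \models P^I$ to guarantee that $I$ is a model on the right-hand side, and then I would invoke Proposition~\ref{prop:reduct} in the other direction to transport the explanation back to $P$.

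The one subtlety worth flagging is the handling of the underlying \emph{model} condition, since Proposition~\ref{prop:reduct} presupposes $I \models P$ on both of its directions. The cleanest route is to note that for any $I \models P$ we always have $I \models P^I$ as well (this is immediate from the definition of the reduct: each rule $\Lb{r}:\Hd{r} \leftarrow \Bdp{r}$ in $P^I$ arises from a rule $r \in P$ with $I \models \Bdn{r}$, and $I \models \Bd{r} \to \Hd{r}$ together with $I \models \Bdn{r}$ forces $I \models \Bdp{r}\to\Hd{r}$). For the converse direction of the corollary, the hypothesis $I \in \JM(P^I)$ only directly gives $I \models P^I$, so to apply Proposition~\ref{prop:reduct} I would either restrict attention to the intended reading where $I$ is understood to range over models of $P$, or argue that the mere existence of a support graph already constrains $I$ appropriately. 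Since the corollary is meant to be a direct reformulation of Proposition~\ref{prop:reduct}, the expected reading is that $I \models P$ throughout.

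The main obstacle, then, is not mathematical depth but rather making precise the quantification over $I$ and confirming that the reduct does not introduce spurious models or lose the model property; the graph-theoretic content is entirely carried by Proposition~\ref{prop:reduct}. Once the model-hypothesis bookkeeping is settled, the corollary follows in one line from the fact that the set of explanations of $I$ under $P$ literally coincides with the set of explanations of $I$ under $P^I$.
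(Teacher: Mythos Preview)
Your approach is exactly the intended one: the paper states the corollary immediately after Proposition~\ref{prop:reduct} with no separate proof, so it is meant to be a one-line consequence of that proposition. The graph-theoretic content is, as you say, entirely carried by Proposition~\ref{prop:reduct}.

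The only loose end is the model-hypothesis bookkeeping for the converse direction, which you flag but do not actually resolve. There is no need to retreat to an ``intended reading'' where $I$ is tacitly assumed to be a model of $P$: the equivalence $I \models P \Leftrightarrow I \models P^I$ holds outright. You already argued $I \models P \Rightarrow I \models P^I$. For the other direction, take any $r \in P$ and suppose $I \models \Bd{r}$; then in particular $I \models \Bdn{r}$, so the rule $\Lb{r}:\Hd{r}\leftarrow\Bdp{r}$ belongs to $P^I$, and since $I \models P^I$ and $I \models \Bdp{r}$ we get $I \models \Hd{r}$. Hence $I \models P$. With this standard observation in hand, both directions of the corollary are covered by Proposition~\ref{prop:reduct} without any additional assumption on $I$, and your argument becomes complete.
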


\begin{theorem}\label{th:stable-are-justified}
Stable models are justified: $\SM(P) \subseteq \JM(P)$.
\end{theorem}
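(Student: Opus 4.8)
The plan is to reduce the problem to the reduct and then build an explanation atom by atom, using minimality to guarantee the construction never stalls. By Proposition~\ref{prop:reduct} it suffices to exhibit an explanation for $I$ under the positive program $Q \eqdef P^I$ (here $I \models P$, which one checks directly from $I \models P^I$, so Proposition~\ref{prop:reduct} applies). Since $I$ is stable, $I$ is by definition a minimal model of $Q$, and $Q$ is positive, so $\Bd{r}=\Bdp{r}$ for every $r \in Q$ and a rule supports $p$ exactly when $p \in \HH{r}$ and $\BB{r}\subseteq I$.

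First I would describe a greedy founding procedure. Maintain a set $J \subseteq I$ of already-explained atoms, initially empty, together with a partial injective labelling $\lambda$. At each step pick an atom $p \in I \setminus J$ and a rule $r \in \Sup{I}{Q}{p}$ whose positive body is already founded, $\BB{r}\subseteq J$, and whose label is still unused; set $\lambda(p)\eqdef\Lb{r}$, add $p$ to $J$, and mark $\Lb{r}$ as used. The incoming edges of $p$ are then forced by condition~\ref{def:exp.3}, namely $(q,p)$ for each $q \in \BB{r}$. Two invariants hold by construction: distinct atoms receive distinct labels, since each label is consumed at most once, giving condition~\ref{def:exp.2}; and every edge $(q,p)$ runs from an atom added strictly earlier ($q \in \BB{r}\subseteq J$ when $p$ is added) to a later one, so the insertion order is a topological order and the graph is acyclic, giving condition~\ref{def:exp.1}. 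In particular $p \notin \BB{r}$, so no self-loops arise.

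The crux is to show the procedure does not halt before $J = I$, and this is exactly where minimality enters. Suppose it halts with $J \subsetneq I$. I would prove that $J$ is then a classical model of $Q$, contradicting minimality of $I$. Take any rule $r \in Q$ with $\BB{r}\subseteq J$ and assume, for contradiction, $\HH{r}\cap J = \emptyset$. Since $\BB{r}\subseteq J \subseteq I$ we have $I \models \Bdp{r}$, and as $I \models Q$ there is an atom $p \in \HH{r}\cap I$; because $\HH{r}\cap J=\emptyset$, this $p$ lies in $I\setminus J$. Then $r \in \Sup{I}{Q}{p}$ with $\BB{r}\subseteq J$, so the only obstruction to performing the step $(p,r)$ is that $\Lb{r}$ was already used. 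But a used label means $r$ had previously founded some atom $p' \in \HH{r}$ placed in $J$, i.e.\ $p' \in \HH{r}\cap J$, contradicting $\HH{r}\cap J=\emptyset$. Hence $\HH{r}\cap J \neq \emptyset$ whenever $\BB{r}\subseteq J$, so $J \models Q$; with $J \subseteq I$ and $J \neq I$ this contradicts minimality. Therefore $J=I$ at termination, $\lambda$ is total, and $G=\tuple{I,E,\lambda}$ is an explanation for $I$ under $Q=P^I$, hence under $P$.

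I expect the main obstacle to be the disjunctive case: a rule $r$ with several head atoms in $I$ may be the only way to found one of them, yet it must serve a single atom to keep $\lambda$ injective. The argument above deliberately avoids the tempting "found using single-head rules only" strategy, which genuinely fails on cyclic disjunctions; instead it allows any supporting rule and lets the stuck-implies-model step absorb the disjunctive rules. The essential observation making this work is that once a rule has been consumed it has already deposited one of its head atoms into $J$, which is precisely what is needed to verify $J \models r$.
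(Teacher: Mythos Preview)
Your proposal is correct and follows essentially the same approach as the paper: reduce to the positive reduct, then greedily grow a set of founded atoms one rule at a time, and use minimality of $I$ among models of $P^I$ to show the process exhausts $I$. The only cosmetic difference is that the paper selects at each step a rule whose head is not yet satisfied by the current set (so label freshness is automatic and termination is phrased as ``stop when $I_i \models P^I$, then $I_i=I$ by minimality''), whereas you select an atom--rule pair with an explicit ``label unused'' side condition and then argue that being stuck forces $J \models P^I$; the two organisations are equivalent.
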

\begin{proof}
Let $I$ be a stable model of $P$.
To prove that there is an explanation $G$ for $I$ under $P$, we can use Proposition~\ref{prop:redsupport} and just prove that there is some explanation $G$ for $I$ under $P^I$.
We will build the explanation with a non-deterministic algorithm where, in each step $i$, we denote the graph $G_i$ as $G_i=\tuple{I_i,E_i,\lambda_i}$ and represent the labelling $\lambda_i$ as a set of pairs of the form $(\ell:p)$ meaning $\ell=\lambda(p)$.
The algorithm proceeds as follows:
\begin{algorithmic}[1]
\State $I_0 \gets \emptyset; E_0  \gets \emptyset; \lambda_0 \gets \emptyset$ 
\State $G_0 = \tuple{I_0,E_0,\lambda_0}$ 
\State $i \gets 0$ 
\While{$I_i \not\models P^I$ }
  \State{Pick a rule $r \in P^I$ s.t. $I_i \models \Bd{r} \wedge \neg \Hd{r}$}
  \State{Pick an atom $p \in I \cap \HH{r}$}
  \State{$I_{i+1} \gets I_i \cup \{p\} $}
  \State{$\lambda_{i+1} \gets \lambda_i \cup \{(\ell: p)\} $}
  \State{$E_{i+1} \gets E_i \cup \{(q,p) \mid q \in \BB{r}\} $}
  \State{$G_i \gets \tuple{I_i,E_i,\lambda_i}$}
  \State $i \gets i+1$
\EndWhile
\end{algorithmic}

\noindent The existence of a rule $r \in P^I$ in line 5 is guaranteed because the {\bf while} condition asserts $I_i \not\models P^I$ and so there must be some rule whose positive body is satisfied by $I_i$ but its head is not satisfied.
We prove next that the existence of an atom $p \in I \cap \Hd{r}$ (line 5) is also guaranteed.
First, note that the {\bf while} loop maintains the invariant $I_i \subseteq I$, since $I_0=\emptyset$ and $I_i$ only grows with atoms $p$ (line 7) that belong to $I$ (line 6).
Therefore, $I_i \models \Bd{r}$ implies $I \models \Bd{r}$, but since $I \models P^I$, we also conclude $I \models r$ and thus $I \models \Hd{r}$ that is $I \cap \HH{r} \neq \emptyset$, so we can always pick some atom $p$ in that intersection.
Now, note that the algorithm stops because, in each iteration, $I_i$ grows with exactly one atom from $I$ that was not included before, since $I_i \models \neg \Hd{r}$, and so, this process will stop provided that $I$ is finite.
The {\bf while} stops satisfying $I_i \models P^I$ for some value $i=n$.
Moreover, $I_n = I$, because otherwise, as $I_i \subseteq I$ is an invariant, we would conclude $I_n \subset I$ and so $I$ would not be a minimal model of $P^I$, which contradicts that $I$ is a stable model of $P$.
We remain to prove that the final $G_n=\tuple{I_n,E_n,\lambda_n}$ is a correct explanation for $I$ under $P^I$.
As we said, the atoms in $I$ are the graph nodes $I_n=I$.
Second, we can easily see that $G_n$ is acyclic because each iteration adds a new node $p$ and links this node to previous atoms from $\BB{r} \subseteq I_i$ (remember $I_i \models \Bd{r}$) so no loop can be formed.
Third, no rule label can be repeated, because we go always picking a rule $r$ that is new, since it was not satisfied in $I_i$ but becomes satisfied in $I_{i+1}$ (the rule head $\Hd{r}$ becomes true).
Last, for every $p \in I$, it is not hard to see that the (positive) rule $r \in P^I$ such that $\Lb{r}=\lambda_n(p)$ satisfies $p \in \HH{r}$ and $\BB{r} = \{ q \mid (q,p) \in E \}$ by the way in which we picked $r$ and inserted $p$ in $I_i$, whereas $I \models \Bd{r}$ because $I_i \models \Bd{r}$, $r$ is a positive rule and $I_i \subseteq I$.
\end{proof}

As a result, we get $\SM(P) \subseteq \JM(P) \subseteq \SPM(P)$, that is, justified models lay in between stable and supported.

\begin{proposition}\label{prop:least}
If $P$ is a consistent Horn program then it has a unique justified model $I$ that coincides with the least model of $P$.
\end{proposition}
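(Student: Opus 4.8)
The plan is to split the statement into existence and uniqueness, leaning on the results already established. First I would settle existence. Since $P$ is positive, every $\Bdn{r}=\top$, so $I \models \Bdn{r}$ trivially and $P^I=P$ for every interpretation $I$; moreover, being Horn, the rules of $P$ admit a least classical model whenever $P$ is consistent, which I will call $I$. For a positive non-disjunctive program this least model is precisely the unique minimal model of $P$, and hence the unique stable model $I \in \SM(P)$. Applying Theorem~\ref{th:stable-are-justified}, we obtain $I \in \JM(P)$, so at least one justified model exists and it coincides with the least model.

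The bulk of the work is uniqueness: I must show that no model other than $I$ can be justified. I would take an arbitrary justified model $J$ with some explanation $G$ and prove $J = I$ by double inclusion. The inclusion $I \subseteq J$ is immediate, since a justified model satisfies $J \models P$ by definition and $I$, being the least model of $P$, is contained in every model. For the converse $J \subseteq I$, I would invoke Proposition~\ref{prop:MP}: because $P$ is Horn and $G$ is an explanation for $J$, every atom $p \in J$ has a Modus Ponens derivation $\pi_G(p)$ built solely from the rules of $P$. Since the least model $I$ is exactly the closure of $P$ under forward chaining, any atom possessing such a derivation belongs to $I$; a short induction on the height of $\pi_G(p)$, which is finite and well-founded thanks to acyclicity, makes this explicit. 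Hence $p \in I$ for every $p \in J$, giving $J \subseteq I$ and therefore $J = I$.

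The only delicate point worth spelling out is the role of constraints and what consistency buys us. Horn programs may contain constraints $\bot \leftarrow B$, and these never label any atom, since their head is empty; their sole effect is to force $B$ false in any classical model, so they contribute nothing to the explanations themselves. I would observe that the least model $M$ of the constraint-free part of $P$ is contained in every model of the rules, so if $M$ violated some constraint then, the bodies being conjunctions of atoms and hence upward monotone, every larger model would violate it too, making $P$ inconsistent. Thus consistency is precisely the statement that $M$ satisfies all constraints, i.e.\ $M$ is the least model of the whole program, which is the $I$ used above. I expect this handling of constraints, rather than the core Modus Ponens argument, to be the main place where care is needed.
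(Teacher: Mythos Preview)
Your proof is correct and follows essentially the same approach as the paper: existence via Theorem~\ref{th:stable-are-justified} applied to the least model, and uniqueness via Proposition~\ref{prop:MP} to show that every atom of a justified model has a Modus Ponens derivation and hence lies in the least model. The paper phrases uniqueness as a contradiction on an atom $p\in J\setminus I$ rather than your double inclusion, and omits your careful discussion of constraints, but the underlying argument is the same.
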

\begin{proof}
Since $P$ is Horn and consistent (all constraints are satisfied) its unique stable model is the least model $I$. 
By Theorem~\ref{th:stable-are-justified}, $I$ is also justified by some explanation $G$.
We remain to prove that $I$ is the unique justified model.
Suppose there is another model $J \supset I$ (remember $I$ is the least model) justified by an explanation $G$ and take some atom $p \in J \setminus I$.
Then, by Proposition~\ref{prop:MP}, the proof for $p$ induced by $G$, $\pi_G(p)$, is a Modus Ponens derivation of $p$ using the rules in $P$.
Since Modus Ponens is sound and the derivation starts from facts in the program, this means that $p$ must be satisfied by any model of $P$, so $p \in I$ and we reach a contradiction.
\end{proof}

In general, the number of explanations for a single justified model can be exponential, even when the program is Horn, and so, has a unique justified and stable model corresponding to the least classical model, as we just proved.
As an example\footnote{This example was already introduced as Program 7.1 in~\cite{Fandinno15}.}:

\begin{example}[A chain of firing squads]
Consider the following variation of the classical \emph{Firing Squad Scenario} introduced by~\cite{Pearl99} for causal counterfactuals (although we do not use it for that purpose here).
We have an army distributed in $n$ squads of three soldiers each, a captain and two riflemen for each squad.
We place the squads in a sequence of $n$ consecutive hills $i=0,\dots,n-1$.
An unfortunate prisoner is at the last hill $n-1$, and is being aimed at by the last two riflemen.
At each hill $i$, the two riflemen $a_i$ and $b_i$ will fire if their captain $c_i$ gives a signal to fire.
But then, captain $c_{i+1}$ will give a signal to fire if she hears a shot from the previous hill $i$ in the distance.
Suppose captain $c_0$ gives a signal to fire.
Our logic program would have the form:
\[\begin{array}{r@{\ }l@{\quad\quad}r@{\ }l@{\quad\quad}r@{\ }l}
s_0: & \mathit{signal}_0
&
a_i: & \mathit{fireA}_i \leftarrow \mathit{signal}_i 
&
a'_{i+1}: & \mathit{signal}_{i+1} \leftarrow \mathit{fireA}_i \\
& & 
b_i: & \mathit{fireB}_i \leftarrow \mathit{signal}_i
&
b'_{i+1}: & \mathit{signal}_{i+1} \leftarrow \mathit{fireB}_i 
\end{array}
\]
for all $i=0,\dots,n-1$ where we assume (for simplicity) that $\mathit{signal}_{n}$ represents the death of the prisoner.
This program has one stable model (the least model) making true the $3 n + 1$ atoms occurring in the program. 
However, this last model has $2^n$ explanations because to derive $\mathit{signal}_{i+1}$ from level $i$, we can choose between any of the two rules $a'_i$ or $b'_i$ (corresponding to the two riflemen) in each explanation.
\qed
\end{example}

In many disjunctive programs, justified and stable models coincide.
For instance, the following example is an illustration of a program with disjunction and head cycles.
\begin{example}
Let $P$ be the program:
\begin{eqnarray*}
\ell_1: p \vee q 
\hspace{50pt}
\ell_2: q \leftarrow p
\hspace{50pt}
\ell_3: p \leftarrow q 
\end{eqnarray*}
This program has one justified model $\{p,q\}$ that coincides with the unique stable model and has two possible explanations, $\{(\ell_1:p),(\ell_2:q)\}$ and $\{(\ell_1:q),(\ell_3:p)\}$.\qed
\end{example}
However, in the general case, not every justified model is a stable model: we provide next a simple counterexample.
Consider the program $P$:
\begin{eqnarray*}
\ell_1: a \vee b 
\hspace{50pt}
\ell_2: a \vee c
\end{eqnarray*}
whose classical models are the five interpretations: $\{a\}$, $\{a,c\}$, $\{a,b\}$, $\{b,c\}$ and $\{a,b,c\}$.
The last one $\{a,b,c\}$ is not justified, since we would need three different labels and we only have two rules.
Each model $\{a,c\}$, $\{a,b\}$, $\{b,c\}$ has a unique explanation corresponding to the atom labellings $\{(\ell_1:a),(\ell_2:c)\}$, $\{(\ell_1:b),(\ell_2:a)\}$ and $\{(\ell_1:b),(\ell_2:c)\}$, respectively.
On the other hand, model $\{a\}$ has two possible explanations, corresponding to  $\{(\ell_1:a)\}$ and $\{(\ell_2:a)\}$.
Notice that, in the definition of explanation, there is no need to fire every rule with a true body in $I$ -- we are only forced to explain every true atom in $I$.
Note also that only the justified models $\{a\}$ and $\{b,c\}$ are also stable: this is due to the minimality condition imposed by stable models on positive programs, getting rid of the other two justified models $\{a,b\}$ and $\{a,c\}$.
The following theorem asserts that, for non-disjunctive programs, every justified model is also stable.

\begin{theorem}\label{th:nondisjunctive}
If $P$ is a non-disjunctive program, then $\SM(P)=\JM(P)$.\qed
\end{theorem}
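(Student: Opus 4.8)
The plan is to prove the missing inclusion $\JM(P) \subseteq \SM(P)$, since the converse $\SM(P) \subseteq \JM(P)$ has already been established by Theorem~\ref{th:stable-are-justified}. The whole argument will be routed through the reduct, so that the earlier result on Horn programs (Proposition~\ref{prop:least}) can be brought to bear.

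First I would take an arbitrary justified model $I \in \JM(P)$. By Proposition~\ref{prop:reduct} (equivalently, its Corollary), $I$ is also justified under the reduct, i.e. $I \in \JM(P^I)$. The key observation is that the reduct of a non-disjunctive program is \emph{Horn}: by construction $P^I$ is positive, and since the reduct preserves heads (each rule of $P^I$ keeps the same $\HH{r}$ as its originating rule in $P$), the bound $|\HH{r}| \le 1$ carries over, so $P^I$ is also non-disjunctive. Moreover, $P^I$ is consistent, because $I \models P$ implies $I \models P^I$: each rule of $P^I$ stems from a rule $r \in P$ with $I \models \Bdn{r}$, so satisfaction of the original implication $\Bd{r} \to \Hd{r}$ reduces precisely to satisfaction of its positive part $\Bdp{r} \to \Hd{r}$, which is the reduct rule.

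Next I would invoke Proposition~\ref{prop:least}: a consistent Horn program has a unique justified model, coinciding with its least classical model. Since $I \in \JM(P^I)$ and $P^I$ is consistent Horn, $I$ must be exactly this unique justified model, hence the least model of $P^I$. The least model of a Horn program is in particular its unique minimal model, so $I$ is a minimal model of $P^I$ --- which is precisely the definition of $I$ being a stable model of $P$. Combined with Theorem~\ref{th:stable-are-justified}, this yields $\SM(P) = \JM(P)$.

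The main obstacle --- and the only place where non-disjunctivity is genuinely used --- is the reduction step that guarantees $P^I$ is Horn, as this is what unlocks Proposition~\ref{prop:least}. For a disjunctive program the argument breaks down, since the reduct may retain disjunctive heads and Proposition~\ref{prop:least} no longer applies; indeed the counterexample $\ell_1: a \vee b$, $\ell_2: a \vee c$ discussed above exhibits justified models such as $\{a,b\}$ and $\{a,c\}$ that fail to be stable. Everything else is routine verification of the hypotheses of the cited results.
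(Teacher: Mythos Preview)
Your proposal is correct and follows essentially the same route as the paper's proof: pass to the reduct via Proposition~\ref{prop:reduct}, observe that $P^I$ is Horn because non-disjunctivity is preserved, and then invoke Proposition~\ref{prop:least} to conclude that $I$ is the least model of $P^I$ and hence stable. Your version is in fact slightly more careful, since you explicitly verify the consistency hypothesis of Proposition~\ref{prop:least} (via $I \models P \Rightarrow I \models P^I$), a point the paper glosses over.
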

\begin{proof}
Given Theorem~\ref{th:stable-are-justified}, we must only prove that, for non-disjunctive programs, every justified model is also stable.
Let $I$ be a justified model of $P$.
By Proposition~\ref{prop:reduct}, we also know that $I$ is a justified model of $P^{I}$. 
$P^{I}$ is a positive program and is non-disjunctive (since $P$ was non-disjunctive) and so, $P$ is a Horn program.
By Proposition~\ref{prop:least}, we know $I$ is also the \emph{least model} of $P^{I}$,
which makes it a stable model of $P$.
\end{proof}

Moreover, for non-disjunctive programs, we can prove that our definition of supported model, coincides with the traditional one in terms of fixpoints of the immediate consequences operator~\cite{vEm76} or as models of completion~\cite{Cla78}.
Given a non-disjunctive program $P$, let $T_P(I)$ be defined as $\{p \mid r \in P, I \models \Bd{r}, \Hd{r}=p\}$.
\begin{theorem}\label{th:nondisjunctive2}
If $P$ is a non-disjunctive program, then $I=T_P(I)$ iff $I \in \SPM(P)$.\qed
\end{theorem}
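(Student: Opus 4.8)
The plan is to prove the two implications separately, exploiting the single fact that in a non-disjunctive program every rule has at most one head atom; this is precisely what makes the correspondence go through (and what fails for disjunctive heads).

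For the direction $I \in \SPM(P) \Rightarrow I = T_P(I)$, I would start from a support graph $G = \tuple{I,E,\lambda}$ of $I$ and read off both inclusions directly. For $I \subseteq T_P(I)$: given $p \in I$, condition~\ref{def:exp.3} supplies the rule $r$ with $\Lb{r} = \lambda(p)$ and $r \in \Sup{I}{P}{p}$; since $p \in \HH{r}$ and $P$ is non-disjunctive we have $\HH{r} = \{p\}$, hence $\Hd{r} = p$ together with $I \models \Bd{r}$, so $p \in T_P(I)$. For $T_P(I) \subseteq I$: any $p \in T_P(I)$ arises from a rule $r$ with $\Hd{r} = p$ and $I \models \Bd{r}$, and since supported models are classical models we have $I \models \Bd{r} \to \Hd{r}$, which forces $p \in I$. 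These two inclusions give $I = T_P(I)$.

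For the converse $I = T_P(I) \Rightarrow I \in \SPM(P)$, I would build a support graph explicitly. For each $p \in I = T_P(I)$ choose a rule $r_p$ with $\Hd{r_p} = p$ and $I \models \Bd{r_p}$ (so $r_p \in \Sup{I}{P}{p}$), set $\lambda(p) \eqdef \Lb{r_p}$, and let $E \eqdef \{(q,p) \mid p \in I,\ q \in \BB{r_p}\}$. Because $I \models \Bd{r_p}$ entails $I \models \Bdp{r_p}$, each such $q$ lies in $I$, so $E \subseteq I \times I$, and by construction $\{q \mid (q,p) \in E\} = \BB{r_p}$, which is condition~\ref{def:exp.3}. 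The key observation is that injectivity (condition~\ref{def:exp.2}) comes for free: if $\lambda(p) = \lambda(p')$ then $r_p$ and $r_{p'}$ share a label, hence are the same rule (labels are unique in $P$), whence $\{p\} = \HH{r_p} = \HH{r_{p'}} = \{p'\}$ by non-disjunctiveness and $p = p'$. Note that no acyclicity is required here, since a support graph need not be an explanation, so possible cycles among the chosen $r_p$ cause no problem.

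The subtle point, and what I expect to be the main obstacle, is securing $I \models P$ in the converse direction, which is needed both to invoke the definition of support graph (it presupposes a classical model) and to conclude $I \in \SPM(P)$. The inclusion $T_P(I) \subseteq I$ guarantees satisfaction of every rule with a nonempty head, but $T_P$ is blind to constraints (rules with $\Hd{r} = \bot$, which never contribute an atom to $T_P(I)$). For a constraint-free non-disjunctive program this gap does not arise and $I = T_P(I)$ already yields $I \models P$; in the presence of constraints one must either read the statement with $I \models P$ built in or restrict to constraint-free programs, since otherwise $I = \{p\}$ is a fixpoint of $T_P$ for $P = \{\ell_1 : p, \ \ell_2 : \bot \leftarrow p\}$ yet is not even a model. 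Apart from this caveat the argument is a direct unfolding of the definitions, and the only genuine use of the non-disjunctive hypothesis is the automatic injectivity of $\lambda$ above, which is exactly why the analogous equivalence collapses once disjunctive heads are allowed.
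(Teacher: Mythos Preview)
Your proof is correct and follows essentially the same route as the paper: both directions are argued by the same two inclusions, and the construction of the support graph from a chosen rule $r_p$ per atom is identical. You are in fact more careful than the paper in verifying injectivity of $\lambda$ explicitly via uniqueness of labels and $|\HH{r_p}|\le 1$, a point the paper leaves implicit. Your caveat about constraints is well taken: the paper simply asserts ``it is easy to see that $I=T_P(I)$ implies $I\models P$'', which, as your counterexample shows, tacitly assumes a constraint-free program (or that the fixpoint characterisation is only meant for such programs, in line with the classical van Emden--Kowalski and Clark settings it cites).
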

\begin{proof}
For left to right, suppose $I=T_P(I)$.
It is easy to see that this implies $I \models P$.
By definition of $T_P$, for each atom $p$ there exists some rule $r$ with $\Hd{r}=p$ and $I \models \Bd{r}$.
Let us arbitrarily pick one of those rules $r_p$ for each $p$.
Then we can easily form a support graph where $\lambda(p)=\Lb{r_p}$ and assign all the incoming edges for $p$ as $(q,p)$ such that $q \in \Bdp{r_p}$.

For right to left, suppose $I \models P$ and there is some support graph $G$ of $I$ under $P$.
We prove both inclusion directions for $I=T_P(I)$.
For $\subseteq$, suppose $p \in I$.
Then $p$ is a node in $G$ and there is a rule $r$ such that $\lambda(p)=\Lb{r}$, $p=\Hd{r}$ ($P$ is non--disjunctive) and $I \models \Bd{r}$.
But then $p \in T_P(I)$.
For $\supseteq$, take any $p \in T_P(I)$ and suppose $p \not\in I$.
Then, we have at least some rule $r \in P$ with $I \models \Bd{r}$ and $I \not\models \Hd{r} (=p)$, something that contradicts $I \models P$.
\end{proof}

To illustrate supported models in the disjunctive case, consider the program:
\begin{eqnarray*}
\ell_1: a \vee b \leftarrow c
\hspace{50pt}
\ell_2: c \leftarrow b
\end{eqnarray*}
The only justified model of this program is $\emptyset$ which is also stable and supported.
Yet, we also obtain a second supported model $\{b,c\}$ that is justified by the (cyclic) support graph with labelling $\{\ell_1:b, \ell_2:c\}$.

\section{An ASP encoding to compute explanations}
\label{sec:encoding}

In this section, we focus on the computation of explanations for a given stable model.
We assume that we use an ASP solver to obtain the answer sets of some program $P$ and that we have some way to label the rules.
For instance, we may use the code line number (or another tag specified by the user), followed by the free variables in the rule and some separator.
In that way, after grounding, we get a unique identifier for each ground rule.

To explain the answer sets of $P$ we may build the following (non-ground) ASP program $x(P)$ that can be fed with the (reified) true atoms in $I$ to build the ground program $x(P,I)$.
As we will prove, the answer sets of $x(P,I)$ are in one-to-one correspondence with the explanations of $I$.
The advantage of this technique is that, rather than collecting all possible explanations in a single shot, something that results too costly for explaining large programs, we can perform regular calls to an ASP solver for $x(P,I)$ to compute one, several or all explanations of $I$ on demand.
Besides, this provides a more declarative approach that can be easily extended to cover new features (such as, for instance, minimisation among explanations).

For each rule in $P$ of the form \eqref{f:rule}, $x(P)$ contains the set of rules:
\begin{eqnarray}
sup(\ell) & \leftarrow &  as(q_1) \wedge \dots \wedge as(q_n) \wedge as(p_i) \wedge \neg as(s_1) \wedge \dots \wedge \neg as(s_j)\\
& & \ \wedge \ \neg\neg as(t_1) \wedge \dots \wedge \neg\neg as(t_k) \label{f:xP.1}\\
\{ f(\ell,p_i) \} & \leftarrow & f(q_1) \wedge \dots \wedge f(q_n) \wedge as(p_i) \wedge sup(\ell)\label{f:xP.2}\\
\bot & \leftarrow & f(\ell,p_i) \wedge f(\ell,p_h) \label{f:xP.3}
\end{eqnarray}
for all $i,h=1\dots m$ and $i \neq h$, and, additionally $x(P)$ contains the rules:
\begin{eqnarray}
f(A) & \leftarrow & f(L,A) \wedge as(A) \label{f:xP.4}\\
\bot & \leftarrow & \Not f(A) \wedge as(A) \label{f:xP.5}\\
\bot & \leftarrow & f(L,A) \wedge f(L',A) \wedge L \neq L' \wedge as(A) \label{f:xP.6}
\end{eqnarray}
As we can see, $x(P)$ reifies atoms in $P$ using three predicates: $as(A)$ which means that atom $A$ is in the answer set $I$, so it is an initial assumption; $f(L,A)$ means that rule with label $L$ has been ``fired'' for atom $A$, that is, $\lambda(A)=L$; and, finally, $f(A)$ that just means that there exists some fired rule for $A$ or, in other words, we were able to derive $A$.
Predicate $sup(\ell)$ tells us that the body of the rule $r$ with label $\ell$ is ``supported'' by $I$, that is, $I \models \Bd{r}$.
Given any answer set $I$ of $P$, we define the program $x(P,I) \eqdef x(P) \cup \{as(A) \mid A \in I\}$.
It is easy to see that $x(P,I)$ becomes equivalent to the ground program containing the following rules:
\begin{alignat}{3}
\{ f(\ell,p) \} & \leftarrow f(q_1) \wedge \dots \wedge f(q_n) & \quad \quad \text{for each rule $r \in P$ like \eqref{f:rule},}\nonumber\\
& & \quad \quad I\models \Bd{r}, p \in \HH{r} \cap I  \label{f:xPI.1}\\
\bot & \leftarrow f(\ell,p_i) \wedge f(\ell,p_j) & \quad \quad \text{for each rule $r \in P$ like \eqref{f:rule},} \nonumber \\
& & \quad \quad p_i, p_j \in \HH{r}, \ p_i \neq p_j \label{f:xPI.2}\\
f(a) & \leftarrow f(\ell,a) & \quad \quad \text{for each $a \in I$} \label{f:xPI.3}\\
\bot & \leftarrow \Not f(a) & \quad \quad \text{for each $a \in I$}  \label{f:xPI.4}\\
\bot & \leftarrow f(\ell,a) \wedge f(\ell',a) &  \quad \quad \text{for each } a \in I, \ \ell \neq \ell' \label{f:xPI.5}
\end{alignat}

\begin{theorem}[Soundness]\label{th:sound}
Let $I$ be an answer set of $P$. For every answer set $J$ of program $x(P,I)$ there exists an explanation $G=\tuple{I,E,\lambda}$ of $I$ under $P$ such that $\lambda(a)=\ell$ iff $f(\ell,a) \in J$.\qed
\end{theorem}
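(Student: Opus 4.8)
**

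The plan is to take an arbitrary answer set $J$ of $x(P,I)$ and extract from it an explanation $G=\tuple{I,E,\lambda}$ of $I$ under $P$, then verify conditions \ref{def:exp.2}, \ref{def:exp.3} and \ref{def:exp.1} of Definition~\ref{def:exp}. First I would read off the labelling: for each $a \in I$, I define $\lambda(a) \eqdef \ell$ whenever $f(\ell,a) \in J$. I must check this is a well-defined total function on $I$. Totality follows from the constraint \eqref{f:xPI.4}, which forces $f(a) \in J$ for every $a \in I$; together with rule \eqref{f:xPI.3} (the only way to derive $f(a)$) this yields some $\ell$ with $f(\ell,a) \in J$. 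That the value is unique follows from constraint \eqref{f:xPI.5}, which forbids two distinct labels $\ell \neq \ell'$ both firing the same atom $a$. So $\lambda$ is a genuine function.

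Next I would define the edges by condition \ref{def:exp.3}: set $E \eqdef \{ (q,p) \mid q \in \BB{r}, \ \Lb{r}=\lambda(p) \}$, where $r$ is the unique rule with label $\lambda(p)$ (labels are unique in $P$ by assumption). With $E$ fixed this way, condition \ref{def:exp.3} holds essentially by construction, but I still owe two things. First, injectivity of $\lambda$ (condition \ref{def:exp.2}): if $\lambda(p)=\lambda(q)=\ell$ for $p \neq q$, then $f(\ell,p)$ and $f(\ell,q)$ are both in $J$; since $\lambda$ assigns a rule label and a label determines a single rule $r$, the two atoms $p,q$ must lie in $\HH{r}$, and then constraint \eqref{f:xPI.2} is violated, a contradiction. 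Second, I must confirm the rule $r$ with $\Lb{r}=\lambda(p)$ really supports $p$, i.e. $r \in \Sup{I}{P}{p}$: this is immediate because $f(\ell,p) \in J$ can only be derived through an instance of rule \eqref{f:xPI.1}, and those are generated exactly for rules $r$ with $I \models \Bd{r}$ and $p \in \HH{r} \cap I$, which is precisely the definition of $r \in \Sup{I}{P}{p}$.

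The main obstacle, and the heart of the proof, is establishing acyclicity (condition \ref{def:exp.1}). This is where the answer-set semantics of $x(P,I)$ — as opposed to mere classical satisfaction of its constraints — is genuinely used. The plan is to argue by contradiction: suppose $E$ contains a cycle through atoms $a_1, \dots, a_k, a_1$. I would use the supportedness/foundedness of the atoms $f(a)$ in the stable model $J$. Since the only rules deriving $f(a)$ are the instances \eqref{f:xPI.3}, and $f(\ell,a)$ in turn is produced by a choice rule \eqref{f:xPI.1} whose positive body requires $f(q)$ for every $q \in \BB{r}$, the derivation of each $f(a_i)$ in the reduct $x(P,I)^J$ depends on the prior derivation of $f(q)$ for its positive-body atoms, which by the edge definition are exactly the predecessors of $a_i$ in $E$. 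A cycle would make every $f(a_i)$ on the cycle depend (transitively, through the positive bodies in the reduct) on itself, so none of them could be founded in the minimal model $J$; hence $f(a_i) \notin J$, contradicting totality shown above. I would phrase this carefully as a well-foundedness argument on the positive dependency relation of the reduct, which is the rigorous version of ``atoms in a stable model must be non-circularly supported.''

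Having verified \ref{def:exp.2}, \ref{def:exp.3} and \ref{def:exp.1}, I conclude that $G=\tuple{I,E,\lambda}$ is an explanation of $I$ under $P$, and by construction $\lambda(a)=\ell$ iff $f(\ell,a) \in J$, which is exactly the claim. Throughout I would lean on the stated equivalence between $x(P,I)$ and the ground program \eqref{f:xPI.1}--\eqref{f:xPI.5}, so that all reasoning about supports and firings is carried out directly on these ground rules rather than on the reified schema \eqref{f:xP.1}--\eqref{f:xP.6}.
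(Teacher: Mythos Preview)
Your proposal is correct and follows essentially the same route as the paper: establish that $\lambda$ is a well-defined total function on $I$ via \eqref{f:xPI.3}--\eqref{f:xPI.5}, obtain injectivity from \eqref{f:xPI.2}, obtain support from the fact that \eqref{f:xPI.1} is the only rule producing $f(\ell,p)$, and derive acyclicity from the stability of $J$. The only stylistic difference is in the acyclicity step: where you invoke well-foundedness of the positive dependency relation in the reduct, the paper makes this concrete by exhibiting the smaller interpretation $J' = J \setminus \{f(\ell_i,p_i), f(p_i) \mid p_i \text{ on the cycle}\}$ and checking rule by rule that $J'$ still satisfies $x(P,I)^J$, contradicting minimality of $J$; this is exactly the explicit form of your well-foundedness argument.
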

\begin{proof}
We have to prove that $J$ induces a valid explanation $G$.
Let us denote $\At(J) \eqdef \{a \in \At \mid f(a) \in J\}$.
Since \eqref{f:xPI.3} is the only rule for $f(a)$, we can apply completion to conclude that $f(a) \in J$ iff $f(\ell,a) \in J$ for some label $\ell$.
So, the set $\At(J)$ contains the set of atoms for which $J$ assigns some label: we will prove that this set coincides with $I$.
We may observe that $I \subseteq \At(J)$ because for any $a \in I$ we have the constraint \eqref{f:xPI.4} forcing $f(a) \in J$.
On the other hand, $\At(J) \subseteq I$ because the only rules with $f(a)$ in the head are \eqref{f:xPI.3} and these are only defined for atoms $a \in I$.
To sum up, in any answer set $J$ of $x(P,I)$, we derive exactly the original atoms in $I$, $\At(J)=I$ and so, the graph induced by $J$ has exactly one node per atom in $I$.

Constraint \eqref{f:xPI.5} guarantees that atoms $f(\ell,a)$ have a functional nature, that is, we never get two different labels for a same atom $a$.
This allows defining the labelling function $\lambda(a)=\ell$ iff $f(\ell,a) \in J$.
We remain to prove that conditions \ref{def:exp.2}-\ref{def:exp.1} in Definition~\ref{def:exp} hold.
Condition \ref{def:exp.2} requires that $\lambda$ is injective, something guaranteed by~\eqref{f:xPI.2}.
Condition \ref{def:exp.3} requires that, informally speaking, the labelling for each atom $a$ corresponds to an activated, supported rule for $a$.
That is, if $\lambda(a)=\ell$, or equivalently $f(\ell,a)$, we should be able to build am edge $(q,a)$ for each atom in the positive body of $\ell$ so that atoms $q$ are among the graph nodes.
This is guaranteed by that fact that rule \eqref{f:xPI.1} is the only one with predicate $f(\ell,a)$ in the head.
So, if that ground atom is in $J$, it is because $f(q_i)$ are also in $J$ i.e. $q_i \in I$, for all atoms in the positive body of rule labelled with $\ell$.
Note also that \eqref{f:xPI.1} is such that $I \models \Bd{r}$, so the rule supports atom $p$ under $I$, that is, $r \in \Sup{I}{P}{p}$.
Let us call $E$ to the set of edges formed in this way.
Condition~\ref{def:exp.1} requires that the set $E$ of edges forms an acyclic graph.
To prove this last condition, consider the reduct program $x(P,I)^J$.
The only difference of this program with respect to $x(P,I)$ is that rules \eqref{f:xPI.1} have now the form:
\begin{eqnarray}
f(\ell,p) \leftarrow f(q_1) \wedge \dots \wedge f(q_n) \label{f:xPI.1b}
\end{eqnarray}
for each rule $r \in P$ like \eqref{f:rule}, $I\models \Bd{r}$, $p \in \HH{r} \cap I$ as before, but additionally $f(\ell,p) \in J$ so the rule is kept in the reduct.
Yet, the last condition is irrelevant since $f(\ell,p) \in J$ implies $f(p) \in J$ so $p \in \At(J) = I$.
Thus, we have exactly one rule \eqref{f:xPI.1b} in $x(P,I)^J$ per each choice \eqref{f:xPI.1} in $x(P,I)$.
Now, since $J$ is an answer set of $x(P,I)$, by monotonicity of constraints, it \eqref{f:xPI.2}, \eqref{f:xPI.4} and \eqref{f:xPI.5} and is an answer set of the rest of the program $P'$ formed by rules \eqref{f:xPI.1b} and \eqref{f:xPI.2}.
This means that $J$ is a minimal model of $P'$.
Suppose we have a cycle in $E$, formed by the (labelled) nodes and edges $(\ell_1:p_1) \longrightarrow \dots \longrightarrow (\ell_n:p_n) \longrightarrow (\ell_1:p_1)$.%
Take the interpretation $J'=J\setminus \{f(\ell_1,p_1), \dots, f(\ell_n,p_n), f(p_1), \dots, f(p_n)\}$.
Since $J$ is a minimal for $P'$ there must be some rule \eqref{f:xPI.1b} or \eqref{f:xPI.2} not satisfied by $J'$.
Suppose $J'$ does not satisfy some rule \eqref{f:xPI.2} so that $f(a) \not\in J'$ but $f(\ell,a) \in J' \subseteq J$.
This means we had $f(a) \in J$ since the rule was satisfied by $J$ so $a$ is one of the removed atoms $p_i$ belonging to the cycle.
But then $f(\ell,a)$ should have been removed $f(\ell,a)\not\in J'$ and we reach a contradiction.
Suppose instead that $J'$ does not satisfy some rule \eqref{f:xPI.1b}, that is, $f(\ell,p) \not\in J'$ and $\{f(q_1),\dots,f(g_n)\} \subseteq J' \subseteq J$.
Again, since the body holds in $J$, we get $f(\ell,p) \in J$ and so, $f(\ell,p)$ is one of the atoms in the cycle we removed from $J'$.
Yet, since $(\ell:p)$ is in the cycle, there is some incoming edge from some atom in the cycle and, due to the way in which atom labelling is done, this means that this edge must come from some atom $q_i$ with $1\leq i\leq n$ in the positive body of the rule whose label is $\ell$.
But, since this atom is in the cycle, this also means that $f(q_i) \not\in J'$ and we reach a contradiction.
\end{proof}

\begin{theorem}[Completeness]\label{th:complete}
Let $I$ be an answer set of $P$. For every explanation $G=\tuple{I,E,\lambda}$ of $I$ under $P$ there exists some answer set $J$ of program $x(P,I)$ where $f(\ell,a) \in J$ iff $\lambda(a)=\ell$ in $G$.\qed
\end{theorem}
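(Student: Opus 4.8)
The plan is to exhibit the required answer set $J$ explicitly and then verify the two halves of the stable-model definition, namely that $J$ is a model of the reduct $x(P,I)^J$ and that it is a minimal one. Given the explanation $G=\tuple{I,E,\lambda}$, I would define
\[
J \eqdef \{ f(\ell,a) \mid a \in I, \ \lambda(a)=\ell \} \cup \{ f(a) \mid a \in I \}.
\]
By construction this $J$ immediately satisfies the biconditional demanded in the statement, $f(\ell,a)\in J$ iff $\lambda(a)=\ell$, so the whole task reduces to showing that $J$ is an answer set of $x(P,I)$. Note first that condition~\ref{def:exp.3} guarantees each atom $f(\lambda(a),a)$ we placed in $J$ is the head of a genuine choice rule~\eqref{f:xPI.1} of $x(P,I)$: since the rule $r$ with $\Lb{r}=\lambda(a)$ lies in $\Sup{I}{P}{a}$, we have $a\in\HH{r}\cap I$ and $I\models\Bd{r}$, which are exactly the side conditions for~\eqref{f:xPI.1} to be present.

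Next I would check that $J$ is a model and, in particular, that it violates none of the constraints. Constraint~\eqref{f:xPI.2} cannot fire because $\lambda$ is injective (condition~\ref{def:exp.2}): two distinct head atoms of a single rule cannot both carry the same label. Constraint~\eqref{f:xPI.5} cannot fire because $\lambda$ is a function, so each atom receives a single label. Constraint~\eqref{f:xPI.4} is met because $f(a)\in J$ for every $a\in I$, and rule~\eqref{f:xPI.3} is satisfied since $f(\ell,a)\in J$ entails $a\in I$ and hence $f(a)\in J$.

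I would then compute the reduct $x(P,I)^J$. As in the soundness proof, each choice rule~\eqref{f:xPI.1}, read as $f(\ell,p)\leftarrow \mathit{body}\wedge\neg\neg f(\ell,p)$, survives as the definite rule~\eqref{f:xPI.1b} exactly when $f(\ell,p)\in J$, i.e. when $\ell=\lambda(p)$. Thus the positive part of the reduct contains, for each $a\in I$, precisely one rule producing $f(\lambda(a),a)$ whose body ranges over $\{q\mid (q,a)\in E\}=\BB{r}$ (by condition~\ref{def:exp.3}), together with the rules~\eqref{f:xPI.3} and the surviving positive constraints~\eqref{f:xPI.2} and~\eqref{f:xPI.5}; the constraints~\eqref{f:xPI.4} disappear since all $f(a)\in J$. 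Hence the definite part of the reduct is a faithful copy of the dependency structure recorded by $G$: deriving $f(a)$ requires deriving $f(q)$ for every in-neighbour $q$ of $a$.

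The crux, and the step I expect to be the main obstacle, is showing that $J$ is a \emph{minimal} model of this reduct, and this is exactly where acyclicity~\ref{def:exp.1} is needed. Because the definite part is a positive program it has a least model, and I would argue by induction along a topological ordering of the acyclic graph $G$ that this least model equals $J$: a source atom $a$ has empty positive body, so $f(\lambda(a),a)$ and then $f(a)$ are derivable at once, while for a non-source $a$ all in-neighbours precede it in the ordering, so by the inductive hypothesis their $f(\cdot)$ atoms are available and $f(\lambda(a),a)$, $f(a)$ follow. Conversely, no atom outside $J$ can ever be derived, since the only heads appearing in the reduct are $f(\lambda(a),a)$ and $f(a)$ for $a\in I$. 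As the least model $J$ also satisfies the remaining constraints (checked above), every model of the reduct contains $J$, so $J$ is its unique minimal model and therefore an answer set of $x(P,I)$. Were $G$ cyclic, the atoms around a cycle would fail to be derivable and the least model would be a proper subset of $J$, so $J$ would not be stable --- which is precisely what makes acyclicity indispensable here.
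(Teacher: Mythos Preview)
Your proposal is correct and follows essentially the same approach as the paper: define the same candidate $J$, verify it satisfies all rules and constraints of the reduct, and then use the acyclicity of $G$ to argue that every atom of $J$ is derivable in the Horn part of $x(P,I)^J$, hence $J$ is its least (and minimal) model. The only cosmetic difference is that the paper phrases the minimality step via the induced proof trees $\pi_G(a)$ from Definition~\ref{def:proof}, whereas you phrase it as an induction along a topological ordering of $G$; these are equivalent unpackings of the same acyclicity argument.
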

\begin{proof}
Take $I$ an answer set of $P$ and $G=\tuple{I,E,\lambda}$ some explanation for $I$ under $P$ and let us define the interpretation:
\begin{eqnarray*}
J := \{f(a) \mid a \in I\} \cup \{f(\ell,a) \mid \lambda(a)=\ell \}
\end{eqnarray*}
We will prove that $J$ is an answer set of $x(P,I)$ or, in other words, that $J$ is a minimal model of $x(P,I)^J$.
First, we will note that $J$ satisfies $x(P,I)^J$ rule by rule.
For the constraints, $J$ obviously satisfy \eqref{f:xP.4} because it contains an atom $f(a)$ for each $a \in I$.
We can also see that $J$ satisfies \eqref{f:xPI.2} because graph $G$ does not contain repeated labels, so we cannot have two different atoms with the same label.
The third constraint \eqref{f:xPI.5} is also satisfied by $J$ because atoms $f(\ell,a), f(\ell',a)$ are obtained from $\lambda(a)$ that is a function that cannot assign two different labels to a same atom $a$.
Satisfaction of \eqref{f:xPI.2} is guaranteed since the head of this rule $f(a)$ is always some atom $a \in I$ and therefore $f(a) \in J$.
For the remaining rule, \eqref{f:xPI.1}, we have two cases.
If $f(\ell,p) \not\in J$ then the rule is not included in the reduct and so there is no need to be satisfied.
Otherwise, if $f(\ell,p) \in J$ then the rule in the reduct corresponds to \eqref{f:xPI.1b} and is trivially satisfied by $J$ because its only head atom holds in that interpretation.
Finally, to prove that $J$ is a minimal model of $x(P,I)^J$, take the derivation tree $\pi_G(a)$ for each atom $a \in I$.
Now, construct a new tree $\pi$ where we replace each atom $p$ in $\pi_G(a)$ by an additional derivation from $f(\ell,p)$ to $f(p)$ through rule \eqref{f:xPI.3}.
It is easy to see that $\pi$ constitutes a Modus Ponens proof for $f(a)$ under the Horn program $x(P,I)^J$ and the same reasoning can be applied to atom $f(\ell,a) \in J$ that is derived in the tree $\pi$ for $f(a)$.
Therefore, all atoms in $J$ must be included in any model of $x(P,I)^J$.
\end{proof}

\section{Related work}
\label{sec:related-work}

The current approach constitutes the formal basis of the new version of the explanation tool {\tt xclingo}~\cite{CabMun23} which also uses the ASP encoding from Section~\ref{sec:encoding} to compute the explanations.
Theorems~\ref{th:sound} and \ref{th:complete} prove, in this way, that the tool is sound and complete with respect to the definition of explanation provided in the current paper.

There exist many other approaches for explanation and debugging in ASP (see the survey~\cite{FandinnoS19}).
The closest approach to the current work is clearly the one based on \emph{causal graphs}~\cite{CabFanFin2014}.
%
Although we conjecture that a formal relation can be established (we plan this for future work), the main difference is that causal graphs are ``atom oriented'' whereas the current approach is model oriented.
%
For instance, in the firing squads example, the causal-graph explanation for the derivations of atoms $\mathit{signal}_4$ and $\mathit{signal}_8$ would contain algebraic expressions with \emph{all} the possible derivations for each one of those atoms.
In the current approach, however, we would get an individual derivation in each case, but additionally, the proof we get for $\mathit{signal}_4$ has to be \emph{the same one} we use for that atom inside the derivation of $\mathit{signal}_8$.


Justifications based on the positive part of the program were also used before in~\cite{esra13}.
There, the authors implemented an ad-hoc approach to the problem of solving biomedical queries, rather than a general ASP explanation tool.
%
%
%
%
%

%
Other examples of general approaches are the \emph{formal theory of justifications}~\cite{DeneckerBS15}, \emph{off-line justifications}~\cite{PonSon2008}, LABAS~\cite{ST16} (based on argumentation theory~\cite{Bond00, Dung09}) or s(CASP)~\cite{ArCarr20}.
All of them provide graph or tree-based explanations for an atom to be (or not) in a given answer set.
The formal theory of justifications was also extended to deal with nested graph based justifications~\cite{Marynissen22} and is actually a more general framework that allows covering other logic programming semantics.
System {\tt xASP}~\cite{xASP22} generates explanation graphs from~\cite{PonSon2008} and also uses an ASP meta-programming encoding.
In the case of s(CASP), it
%
proceeds in a top-down manner, building the explanation as an ordered list of literals extracted from the goal-driven satisfaction of the query.
%
%
An important difference with respect to this last group of approaches is that their explanations consider dependences through default negation.
To illustrate the effect, take the program:
\begin{eqnarray*}
\ell_1: switch & \\
\ell_2: light & \leftarrow & switch, \Not ab \\
\ell_3: ab & \leftarrow & blown\_fuse \\
\ell_4: ab & \leftarrow & broken\_bulb \\
\ell_5: ab & \leftarrow & blackout, \Not generator
\end{eqnarray*}
The only stable model is $\{switch,light\}$ and its unique explanation is the support graph 
\begin{eqnarray*}
\ell_1: switch \longrightarrow \ell_2:light
\end{eqnarray*}that is, the light is on because we toggled the switch.
Adding negative information would lead us to explain $\Not ab$ and obtain two explanations: one in which we also add that there is no blown fuse, no broken bulb and no blackout; the second one is similar, but instead of no blackout, we have a doubly negative dependence on generator: i.e. nothing prevents having a generator, even though we do not have it. 
Note how these explanations may easily get complicated: we could have to negate multiple alternative ways of breaking the bulb, even when \emph{none of them have happened}\footnote{We face here, somehow, a kind of qualification problem in the explanations.}.
Our approach consists, instead, in explaining the information that currently holds, assuming that other states of affairs will arise in terms of other alternative answer sets.
In other words, we refrain from using facts for which we have no evidence or reason to believe in our current model.

Another distinctive feature of our approach is that it provides explanations for disjunctive programs and, moreover, it has also allowed us to define supported and justified models for that case.
In fact, we plan to study potential connections between justified models and other approaches for disjunction not based in minimal models such as~\cite{ACPPV17} or~\cite{SE19}.

Other ASP explanation approaches have to do with comparing stable models or explaining their non-existence.
For instance, \cite{GebserPST08} uses a meta-programming technique to explain why a given model \emph{is not} an answer set of a given program.
More recently, ~\cite{EiSar19} considered the explanation of ASP programs that have no answer sets in terms of the concept of \emph{abstraction}~\cite{ZeySar21}.
This allows spotting which parts of a given domain are actually relevant for rising the unsatisfiability of the problem.
We plan to explore formal relations to these approaches or to study potential combinations with some of them.
%
\section{Conclusions}
\label{sec:conclusions}

We have introduced the notion of explanation of a model of a logic program as some kind of (acyclic) labelled graph we called \emph{support graph}.
We have have defined justified models as those that have at least one explanation and proved that all stable models are justified, whereas the opposite does not hold, at least for disjunctive programs.
We also provided a meta-programming encoding in ASP that generates the explanations of a given stable model.
We formally proved a one-to-one correspondence between the answer sets of the encoding and the explanations of the original stable model.
Since this encoding constitutes the basis of the tool \xclingo~2.0, we provide in this way a formal proof of correctness for this system.
A system description of the tool is left for a forthcoming document.
Future work includes the comparison to other approaches, the explanation of unsatisfiable programs and the minimisation or even the specification of preferences among explanations.
%
%

\bibliographystyle{plainnat}
\bibliography{refs}

\begin{thebibliography}{23}
\providecommand{\natexlab}[1]{#1}
\providecommand{\url}[1]{\texttt{#1}}
\expandafter\ifx\csname urlstyle\endcsname\relax
  \providecommand{\doi}[1]{doi: #1}\else
  \providecommand{\doi}{doi: \begingroup \urlstyle{rm}\Url}\fi

\bibitem[Aguado et~al.(2017)Aguado, Cabalar, Fandinno, Pearce, P{\'e}rez, and
  Vidal]{ACPPV17}
Felicidad Aguado, Pedro Cabalar, Jorge Fandinno, David Pearce, Gilberto
  P{\'e}rez, and Concepci{\'o}n Vidal.
\newblock Forgetting auxiliary atoms in forks, 2017.
\newblock unpublished draft.

\bibitem[Arias et~al.(2020)Arias, Carro, Chen, and Gupta]{ArCarr20}
Joaquín Arias, Manuel Carro, Zhuo Chen, and Gopal Gupta.
\newblock Justifications for goal-directed constraint answer set programming.
\newblock In \emph{Intl. Conf. on Logic Programming, {ICLP}}, 2020.

\bibitem[Bondarenko et~al.(1997)Bondarenko, Dung, Kowalski, and Toni]{Bond00}
A.~Bondarenko, P.M. Dung, R.A. Kowalski, and F.~Toni.
\newblock An abstract, argumentation-theoretic approach to default reasoning.
\newblock \emph{Artificial Intelligence}, 93\penalty0 (1):\penalty0 63--101,
  1997.

\bibitem[Brewka et~al.(2011)Brewka, Eiter, and Truszczy{\'n}ski]{BET11}
Gerd Brewka, Thomas Eiter, and Miroslaw Truszczy{\'n}ski.
\newblock Answer set programming at a glance.
\newblock \emph{Communications of the ACM}, 54\penalty0 (12):\penalty0 92--103,
  2011.

\bibitem[Cabalar and Mu{\~{n}}iz(2023)]{CabMun23}
Pedro Cabalar and Brais Mu{\~{n}}iz.
\newblock Explanation graphs for stable models of labelled logic programs.
\newblock In Joaqu{\'{\i}}n Arias, Sotiris Batsakis, Wolfgang Faber, Gopal
  Gupta, Francesco Pacenza, Emmanuel Papadakis, Livio Robaldo, Kilian
  R{\"{u}}ckschlo{\ss}, Elmer Salazar, Zeynep~Gozen Saribatur, Ilias
  Tachmazidis, Felix Weitk{\"{a}}mper, and Adam~Z. Wyner, editors,
  \emph{Proceedings of the International Conference on Logic Programming 2023
  Workshops co-located with the 39th International Conference on Logic
  Programming {(ICLP} 2023), London, United Kingdom, July 9th and 10th, 2023},
  volume 3437 of \emph{{CEUR} Workshop Proceedings}. CEUR-WS.org, 2023.
\newblock URL \url{https://ceur-ws.org/Vol-3437/paper3ASPOCP.pdf}.

\bibitem[Cabalar et~al.(2014)Cabalar, Fandinno, and Fink]{CabFanFin2014}
Pedro Cabalar, Jorge Fandinno, and Michael Fink.
\newblock Causal graph justifications of logic programs.
\newblock \emph{Theory and Practice of Logic Programming}, 14:\penalty0
  603--618, 09 2014.

\bibitem[Clark(1978)]{Cla78}
K.~L. Clark.
\newblock Negation as failure.
\newblock In H.~Gallaire and J.~Minker, editors, \emph{Logic and Databases},
  pages 293--322. Plenum, 1978.

\bibitem[Denecker et~al.(2015)Denecker, Brewka, and Strass]{DeneckerBS15}
Marc Denecker, Gerhard Brewka, and Hannes Strass.
\newblock A formal theory of justifications.
\newblock In Francesco Calimeri, Giovambattista Ianni, and Miroslaw
  Truszczynski, editors, \emph{Logic Programming and Nonmonotonic Reasoning -
  13th Intl. Conf., {LPNMR} 2015, Lexington, KY, USA, 2015. Proceedings},
  volume 9345 of \emph{Lecture Notes in Computer Science}. Springer, 2015.

\bibitem[Dung et~al.(2009)Dung, Kowalski, and Toni]{Dung09}
Phan Dung, Robert Kowalski, and Francesca Toni.
\newblock Assumption-based argumentation.
\newblock \emph{Argumentation in Artificial Intelligence}, pages 199--218, 05
  2009.

\bibitem[Eiter et~al.(2019)Eiter, Saribatur, and Schüller]{EiSar19}
Thomas Eiter, Zeynep Saribatur, and Peter Schüller.
\newblock Abstraction for zooming-in to unsolvability reasons of grid-cell
  problems.
\newblock In \emph{Intl. Joint Conf. on Artificial Intelligence {IJCAI} 2019,
  Workshop on Explainable Artificial Intelligence}, 09 2019.

\bibitem[Erdem and Oztok(2013)]{esra13}
Esra Erdem and Umut Oztok.
\newblock Generating explanations for complex biomedical queries.
\newblock \emph{Theory and Practice of Logic Programming}, 15, 09 2013.

\bibitem[Fandinno(2015)]{Fandinno15}
Jorge Fandinno.
\newblock \emph{A Causal Semantics for Logic Programming}.
\newblock PhD thesis, Facultad de Inform\'atica, University of A Coru\~na,
  2015.

\bibitem[Fandinno and Schulz(2019)]{FandinnoS19}
Jorge Fandinno and Claudia Schulz.
\newblock Answering the "why" in answer set programming - {A} survey of
  explanation approaches.
\newblock \emph{Theory and Practice of Logic Programming}, 19\penalty0
  (2):\penalty0 114--203, 2019.

\bibitem[Gebser et~al.(2008)Gebser, P{\"{u}}hrer, Schaub, and
  Tompits]{GebserPST08}
Martin Gebser, J{\"{o}}rg P{\"{u}}hrer, Torsten Schaub, and Hans Tompits.
\newblock A meta-programming technique for debugging answer-set programs.
\newblock In Dieter Fox and Carla~P. Gomes, editors, \emph{Proc. of the 23rd
  {AAAI} Conf. on Artificial Intelligence, Chicago, IL, USA}. {AAAI} Press,
  2008.

\bibitem[Gelfond and Lifschitz(1988)]{GL88}
Michael Gelfond and Vladimir Lifschitz.
\newblock The stable models semantics for logic programming.
\newblock In \emph{Proc. of the 5th Intl. Conf. on Logic Programming}, pages
  1070--1080, 1988.

\bibitem[Marynissen(2022)]{Marynissen22}
Simon Marynissen.
\newblock \emph{Advances in Justification Theory}.
\newblock PhD thesis, Department of Computer Science, {KU} {L}euven, 2022.
\newblock Denecker, Marc and Bart Bogaerts (supervisors).

\bibitem[Pearl(1999)]{Pearl99}
Judea Pearl.
\newblock Reasoning with cause and effect.
\newblock In Thomas Dean, editor, \emph{Proc. of the 16th Intl. Joint Conf. on
  Artificial Intelligence, {IJCAI} 99, Stockholm, Sweden}. Morgan Kaufmann,
  1999.

\bibitem[Pontelli and Son(2006)]{PonSon2008}
Enrico Pontelli and Tran~Cao Son.
\newblock Justifications for logic programs under answer set semantics.
\newblock In Sandro Etalle and Miros{\l}aw Truszczy{\'{n}}ski, editors,
  \emph{Logic Programming}, pages 196--210, Berlin, Heidelberg, 2006. Springer
  Berlin Heidelberg.

\bibitem[Saribatur et~al.(2021)Saribatur, Eiter, and Schüller]{ZeySar21}
Zeynep~G. Saribatur, Thomas Eiter, and Peter Schüller.
\newblock Abstraction for non-ground answer set programs.
\newblock \emph{Artificial Intelligence}, 300:\penalty0 103563, 2021.

\bibitem[Schulz and Toni(2016)]{ST16}
Claudia Schulz and Francesca Toni.
\newblock Justifying answer sets using argumentation.
\newblock \emph{Theory and Practice of Logic Programming}, 16\penalty0
  (1):\penalty0 59--110, 2016.

\bibitem[Shen and Eiter(2019)]{SE19}
Yi{-}Dong Shen and Thomas Eiter.
\newblock Determining inference semantics for disjunctive logic programs.
\newblock \emph{Artificial Intelligence}, 277, 2019.

\bibitem[Trieu et~al.(2022)Trieu, Son, and Balduccini]{xASP22}
Ly~Ly~T. Trieu, Tran~Cao Son, and Marcello Balduccini.
\newblock xasp: An explanation generation system for answer set programming.
\newblock In Georg Gottlob, Daniela Inclezan, and Marco Maratea, editors,
  \emph{Logic Programming and Nonmonotonic Reasoning - 16th International
  Conference, {LPNMR} 2022, Genova, Italy, September 5-9, 2022, Proceedings},
  volume 13416 of \emph{Lecture Notes in Computer Science}, pages 363--369.
  Springer, 2022.

\bibitem[van Emden and Kowalski(1976)]{vEm76}
M.~H. van Emden and R.~A. Kowalski.
\newblock The semantics of predicate logic as a programming language.
\newblock \emph{Journal of the ACM}, 23:\penalty0 733--742, 1976.

\end{thebibliography}
\end{document}